\newtheorem{theorem}{Theorem}
\begin{document}

\title{Reliable Sub-Nyquist Spectrum Sensing via Conformal Risk Control}

\author{Hyojin Lee, \IEEEmembership{Student Member, IEEE}, Sangwoo Park, \IEEEmembership{Member, IEEE}, Osvaldo Simeone, \IEEEmembership{Fellow, IEEE}, Yonina C. Eldar, \IEEEmembership{Fellow, IEEE}, and Joonhyuk Kang, \IEEEmembership{Member, IEEE}
\thanks{This work was supported in part by the Ministry of Science and ICT (MSIT), South Korea, through the Information Technology Research Center (ITRC) Support Program supervised by the Institute of Information and Communications Technology Planning and Evaluation (IITP) under Grant IITP-2024-2020-0-01787 and Grant IITP-2024-RS-2023-00259991. The work of O. Simeone is supported by the European Union’s Horizon Europe project CENTRIC (101096379), by an Open Fellowship of the EPSRC (EP/W024101/1), and by the EPSRC project (EP/X011852/1). The work of Y. C. Eldar is supported by the European Research Council (ERC) under the European Union’s Horizon 2020 research and innovation program (grant No. 101000967) and by the Israel Science Foundation (grant No. 536/22).}
\thanks{H. Lee and J. Kang are with the School of Electrical Engineering, Korea Advanced Institute of Science and Technology (KAIST), Daejeon 34141, South Korea (e-mail: hyojin.lee@kaist.ac.kr; jhkang@ee.kaist.ac.kr).}
\thanks{S. Park and O. Simeone are with King’s Communication Learning \& Information Processing (KCLIP) lab, Centre for Intelligent Information Processing Systems (CIIPS), at the Department of Engineering, King’s College London, U.K. (email: {sangwoo.park; osvaldo.simeone}@kcl.ac.uk).}
\thanks{Y. C. Eldar is with the Faculty of Math and CS, Weizmann Institute of Science, Rehovot, Israel (e-mail: yonina.eldar@weizmann.ac.il).}}

\markboth{}
{}
\maketitle

\begin{abstract}
Detecting occupied subbands is a key task for wireless applications such as unlicensed spectrum access. Recently, detection methods were proposed that extract per-subband features from sub-Nyquist baseband samples and then apply thresholding mechanisms based on held-out data. Such existing solutions can only provide guarantees in terms of false negative rate (FNR) in the asymptotic regime of large held-out data sets. In contrast, this work proposes a threshold mechanism-based conformal risk control (CRC), a method recently introduced in statistics. The proposed CRC-based thresholding technique formally meets user-specified FNR constraints, irrespective of the size of the held-out data set.  By applying the proposed CRC-based framework to both reconstruction-based and classification-based sub-Nyquist spectrum sensing techniques, it is verified via experimental results that CRC not only provides theoretical guarantees on the FNR but also offers competitive true negative rate (TNR) performance.
\end{abstract}

\begin{IEEEkeywords}
Conformal risk control (CRC), false negative rate (FNR), spectrum sensing, sub-Nyquist, wideband.
\end{IEEEkeywords}

\IEEEpeerreviewmaketitle

\section{Introduction}
\IEEEPARstart{S}{pectrum} sharing aims at increasing the efficiency of bands already allocated to wireless communications by allowing for opportunistic reuse \cite{sharing, ss2, ss3, ss4}. The most challenging part of spectrum sharing is the detection of idle bands, which is often referred to as \emph{spectrum sensing} \cite{ss}. It is particularly critical for spectrum sensing to be carried out \emph{reliably} in order to avoid interference towards active communication links. This work presents a solution to this problem that ensures assumption-free theoretical reliability guarantees.

In order for spectrum sensing to operate on large bandwidths, conventional sampling at the Nyquist rate may be excessively costly in terms of computational energy due to the need to implement fast analog-to-digital converters (ADCs). \emph{Sub-Nyquist} sampling is a well-studied solution to this problem \cite{landau,venk,blind, yonina, reviewyonina}. The classical work \cite{landau} showed that the sampling rate required for perfect reconstruction of a signal can be smaller than the Nyquist sampling rate in the presence of unused frequency subbands. Furthermore, it has been more recently demonstrated that the sampling rate can be further reduced when the goal is to reconstruct the power spectral density of the signal \cite{yonina}.

\begin{figure}
    \centerline{\includegraphics[width=13cm]{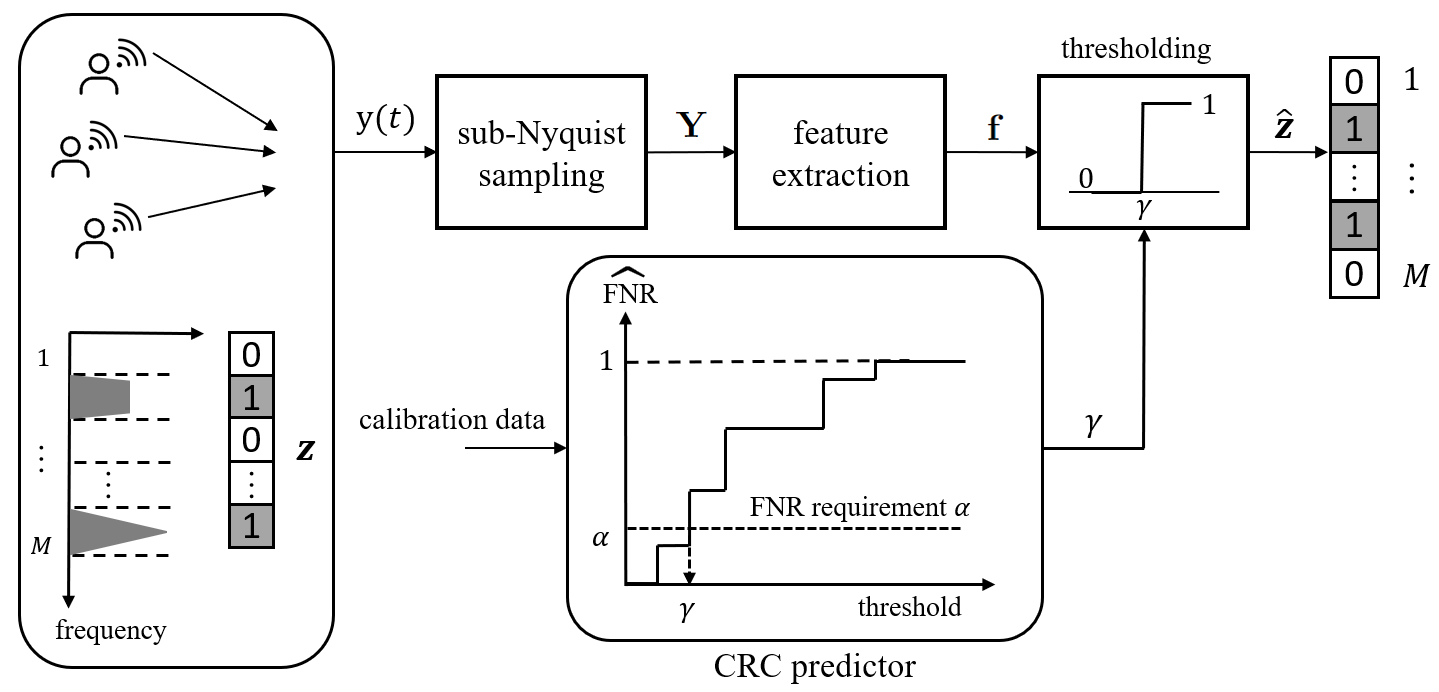}}
    \caption{In sub-Nyquist spectrum sensing, a vector of features $\mathbf{f}$ is extracted from a sub-Nyquist sample matrix $\mathbf{Y}$, from which a predicted set of occupied subbands, encoded by binary vector $\mathbf{\hat{z}}$, is obtained via thresholding. Using calibration data, the proposed approach, based on conformal risk control (CRC), selects thresholds that guarantee any desired target level $\alpha$ of the false negative rate (FNR) on average, irrespective of the choice of the feature extractor.}
    \label{diag1}
\end{figure}

State-of-the-art studies on sub-Nyquist spectrum sensing have focused on finding ways to extract more informative \emph{per-subband features} from sub-Nyquist samples. Most notably reference \cite{yonina} proposed methods to reconstruct the power spectral density, while \cite{gao} proposed to use a feature vector extracted from a neural network trained to carry out parallel binary classification tasks across the subbands (see also \cite{ensen} for an application to modulation classification). By varying the thresholds, one can trace a trade-off between the \emph{false negative rate} (FNR) -- declaring a subband free when it is occupied -- and the \emph{true negative rate} (TNR) -- deciding that a subband is free when it is idle. In typical applications, such as unlicensed spectrum access, one wishes to enforce an upper bound on the FNR. In prior works \cite{yonina}, \cite{vincent, ssedvin, ssedman, quantile}, this is done by leveraging held-out data via non-parameteric or parameteric quantile estimators. Such existing solutions can only provide FNR guarantees that apply to the \emph{asymptotic} regime of large held-out data sets.

This work proposes a threshold mechanism based \emph{conformal risk control} (CRC) \cite{crc}, a method recently introduced in statistics as a generalization of conformal prediction \cite{vovk2005algorithmic, angelopoulos2023conformal}. The proposed CRC-based thresholding technique formally meets user-specified FNR constraints, irrespective of the size of the held-out data set. By applying the proposed CRC-based framework to both reconstruction-based \cite{yonina} and classification-based \cite{gao} sub-Nyquist spectrum sensing techniques, it is verified via experimental results that CRC not only provides theoretical guarantees on the FNR but also offers competitive TNR performance.

The rest of the paper is organized as follows. Sec. II introduces the system model. In Sec. III, we establish the goal of this paper and review existing thresholding schemes. We propose the CRC-based thresholding method in Sec. IV. Numerical results are presented in Sec. V, and we conclude in Sec. VI.

\section{System Model}
\subsection{Signal Model and Sampling}
As illustrated in Fig. \ref{diag1}, we consider a spectral band divided into $M$ subbands of bandwidth $B$ [Hz], with each subband being either idle or occupied. Let $z_m\in\{0,1\}$ denote the occupancy status of the $m$-th subband. The distribution of the vector $\mathbf{z}=[z_1,\dots,z_M]^T$ is defined by the joint probability mass function $p(\mathbf{z})$.

Denote as $s_m(t)$ the baseband signal that may occupy the $m$-th subband, accounting also for the effect of the propagation channel. The real-valued multiband continuous-time baseband received signal $y(t)$ is expressed as
\begin{equation}
    y(t)=\sum_{m=1}^M z_ms_m(t)+n(t),
    \label{sig}
\end{equation}
where $n(t)$ is additive white Gaussian noise (AWGN) with power spectral density $N_0$. The multiband signal (\ref{sig}) is sampled by the receiver at a sub-Nyquist rate, i.e., at a rate smaller than the \emph{Nyquist rate} $1/T=2MB$.

\begin{figure}
    \centerline{\includegraphics[width=9cm]{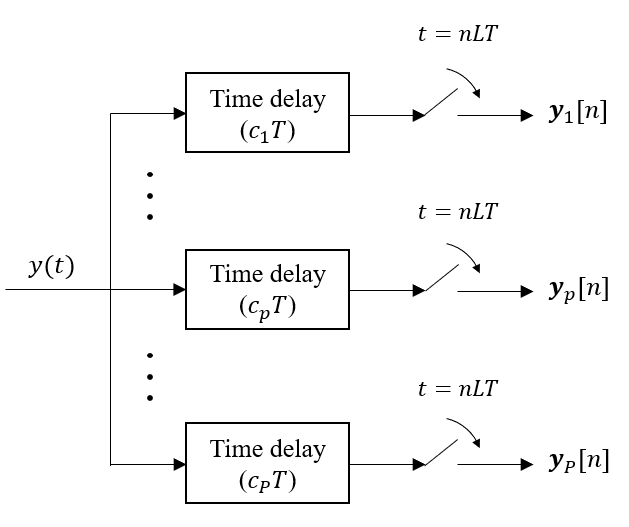}}
    \caption{Illustration of multicoset sampling (MCS), which processes the input signal $y(t)$ on $P>1$ sub-Nyquist sampling branches, each applying a distinct delay. Every sampling branch samples the delayed signal at the same sub-Nyquist sampling rate of $1/LT$, with $L>P$, given the Nyquist sampling rate $1/T$.}
    \label{mcs}
\end{figure}

\textit{Multicoset sampling} (MCS) is a sub-Nyquist sampling scheme for sparse multiband signal \cite{yonina}, \cite{blind}, \cite{gao}. As seen in Fig. \ref{mcs}, MCS employs $P$ parallel sub-Nyquist sampling branches, known as \textit{cosets}, each characterized by a distinct time delay. Each coset $p=1,\dots,P$ delays the signal (\ref{sig}) by $c_pT$, where $c_p$ is an integer, and then it samples the delayed signal at a rate $1/LT=2MB/L$ with $L>P$. The integers $\{c_p\}_{p=1}^P$ satisfy the inequalities $0\leq c_1<c_2<\dots<c_P<L$. 

Overall, denoting as $N_s$ the number of samples collected by each coset during a \emph{sensing period}, the sampled sequence produced by the $p$-th branch in a sensing period is given as
\begin{equation}
    \mathbf{y}_p[n]=y(nLT+c_pT), \qquad n=1,\dots,N_s.
    \label{sampling}
\end{equation}
Finally, the sub-Nyquist samples are collected into a $P\times N_s$ matrix $\mathbf{Y}=[\mathbf{y}_1,\dots,\mathbf{y}_P]^T$, where $\mathbf{y}_p=[\mathbf{y}_p[1],\dots,\mathbf{y}_p[N]]^T$ represents the vector of samples (\ref{sampling}) for the $p$-th branch. The total number of received samples is $N=PN_s$. Based on the sampled signal, the receiver aims to estimate the spectral occupancy vector $\mathbf{z}$ while providing guarantees on the FNR.

We note that alternative sub-Nyquist sensing methods, such as the modulated wideband converter \cite{mishali2010theory, dounaevsky2011xampling} could be also considered. Our focus in this work is on the thresholding process, which is introduced next.

\subsection{Estimating the Spectrum Occupancy Vector}
In sub-Nyquist spectrum sensing, the receiver collects the sub-Nyquist matrix $\mathbf{Y}$ and extracts a feature vector $\mathbf{f}=[f_1,\dots,f_M]^T\in\mathbb{R}^M$ to facilitate estimation of the spectrum occupancy vector $\mathbf{z}=[z_1,\dots,z_M]^T$. In this paper, we exploit two main approaches to obtain the feature vector $\mathbf{f}$.

\subsubsection{Power spectral density (PSD) \cite{yonina}}
Using the discrete Fourier transform (DFT) of the sample matrix $\mathbf{Y}$ and employing a recovery algorithm based on compressive sensing, a feature vector $\mathbf{f}$ can be extracted in which the $m$-th element, $f_m \geq0$, corresponds to the estimated received power within the $m$-th subband \cite{yonina}.

\subsubsection{Logit vector (LV) \cite{gao}}
Using a feed-forward neural network (FFNN) that takes as input a vectorized version of $\mathbf{Y}$, a feature vector $\mathbf{f}$ can be obtained as the output of the last layer of the FFNN. As detailed in \cite{gao}, the FFNN is trained to minimize a cross-entropy loss, and each output $f_m\in [0,1]$ represents the probability assigned by the FFNN to subband $m$ being occupied.

Once the vector $\mathbf{f}$ is obtained using either of the approaches above, for each $m$-th subband, the corresponding feature $f_m$ is compared with a \emph{threshold} $\gamma_m$ to decide whether the subband is occupied or not. Accordingly, the \emph{estimated spectrum occupancy vector} is given by $\hat{z}=[\hat{z}_1,\dots,\hat{z}_M]^T$ with
\begin{equation}
    \hat{z}_m=\mathbbm{1}(f_m\geq \gamma_m),
    \label{dec}
\end{equation}
where $m=1,\dots,M$, and $\mathbbm{1}(\cdot)$ denotes the indicator function that returns $1$ if its argument satisfies the condition inside of the function and $0$ otherwise. The design of the thresholds $\boldsymbol{\gamma}=[\gamma_1,\dots,\gamma_M]^T$ is the focus of this work and will be discussed in the next section.

\section{Calibration}
Calibration is the process of evaluating the threshold vector $\boldsymbol{\gamma}$ in (\ref{dec}) in order to satisfy reliability requirements \cite{vincent}, \cite{ssedvin}, \cite{quantile}. In this section, we first explain the problem of calibration, and then review existing solutions.

\subsection{Design Goal}
Denote as $p(\mathbf{Y,z})$ the joint distribution of the $P\times N_s$ received sample matrix $\mathbf{Y}=[\mathbf{y}_1,\dots,\mathbf{y}_P]^T$ and of the $M\times 1$ true occupancy vector $\mathbf{z}=[z_1,\dots,z_M]^T$. Note that this distribution depends on the unknown probability $p(\mathbf{z})$ of the spectrum occupancy vector, on the distribution of the transmitted signal $s_m(t)$, which may be also unknown, and on the noise $n(t)$ in (\ref{sig}). \emph{Calibration} assumes the availability of a calibration data set $\mathcal{D}^\text{cal}=\{\mathbf{Y}^\text{cal}_i, \mathbf{z}^\text{cal}_i \}_{i=1}^{|\mathcal{D}^\text{cal}|}$, where the $i$-th calibration example is composed of the sub-Nyquist sample matrix $\mathbf{Y}^\text{cal}_i$ and the corresponding true spectrum occupancy vector $\mathbf{z}^\text{cal}_i$, which are assumed to follow the joint distribution $p(\mathbf{Y,z})$. Using the data set $\mathcal{D}^\text{cal}$, the goal is to extract a threshold vector $\boldsymbol{\gamma}$ for the decision rule (\ref{dec}) that satisfies a constraint on FNR for a new data sample $(\mathbf{Y,\mathbf{z}})\sim p(\mathbf{Y,z})$.

The \emph{false negative rate} (FNR) is defined as the fraction of occupied subbands that are mistakenly identified as free, i.e.,
\begin{equation}
    \text{FNR}=\frac{\sum_{m=1}^M (1-\hat{z}_m)z_m}{\sum_{m=1}^M z_m}.
    \label{fnr}
\end{equation}
We impose the constraint that the expected FNR must be less than or equal to a user-defined level $\alpha\in[0,1]$, i.e.,
\begin{equation}
    \mathbb{E}[\text{FNR}]\leq \alpha,
    \label{goal}
\end{equation}
where the expectation is taken over the joint distribution $p(\mathbf{Y,z})$.

Satisfying inequality (\ref{goal}) is trivial, as one can set $\hat{z}_{m}=1$ for all subbands $m=1,\dots,M$, thus not missing any of the occupied subbands. Therefore, it is important to also evaluate the \emph{true negative rate} (TNR), which quantifies the average fraction of unoccupied subbands that are correctly identified:
\begin{equation}
    \text{TNR}=\frac{\sum_{m=1}^M (1-\hat{z}_m)(1-z_m)}{\sum_{m=1}^M (1-z_m)}.
\end{equation}

\subsection{Existing Solutions}
We now review existing solutions that attempt to satisfy constraint (\ref{goal}). Given the calibration data set $\mathcal{D}^\text{cal}$, the feature vector $\mathbf{f}_i^\text{cal}=[f_{i,1}^\text{cal},\dots,f_{i,M}^\text{cal}]^T$ is extracted from matrix $\mathbf{Y}_i^\text{cal}$ for each $i$-th data sample using either the PSD or LV approach. Then, separately for each subband $m$, existing schemes select all the features $f_{i,m}^\text{cal}$ for which the $m$-th subband is occupied, i.e., for which we have $z_{i,m}=1$. This yields the sets of features $\mathcal{F}_m^\text{cal}=\{f_{i,m}^\text{cal}\}_{i:z_{i,m}^\text{cal}=1}$ for each subband $m=1,\dots,M$.

\subsubsection{Parameteric thresholding \cite{vincent}, \cite{ssedvin}}
Parameteric thresholding methods estimate the mean $\mu_m$ and the standard deviation $\sigma_m$ of the features $f_m$ under the assumption that the subband $m$ is occupied. This is done by evaluating the empirical average of the elements in $\mathcal{F}_m^\text{cal}$. Then, the threshold $\gamma_m$ is set as the $\alpha$-quantile of the estimated Gaussian distribution, i.e., as $\gamma_m=\sigma_m Q^{-1}(1-\alpha)+\mu_m$, where $Q(\cdot)$ denotes the Q-function of the standard Gaussian distribution.

\subsubsection{Non-parameteric thresholding \cite{quantile}, \cite{q}}
Without attempting to model the feature distribution, non-parameteric methods set the decision threshold $\gamma_m$ directly as the $\alpha$-quantile of the distribution of features $f_m$ under the assumption that the subband $m$ is occupied. This is done by estimating the empirical $\alpha$-quantile of the elements of set $\mathcal{F}_m^\text{cal}$. Accordingly, the $m$-th threshold $\gamma_m$ is chosen as the $\lfloor \alpha|\mathcal{F}^\text{cal}|\rfloor$-th smallest value in $\mathcal{F}^\text{cal}$, where $|\mathcal{F}^\text{cal}|$ is the size of the set $\mathcal{F}^\text{cal}$.

The rationale of both approaches is that, if the parameteric or non-parameteric estimate of the $\alpha$-quantile is exact, then the probability of the feature $f_m$ being below the threshold $\gamma_m$, yielding the estimate $\hat{z}_m=0$ in (\ref{dec}), when the subband is occupied, i.e., when $z_m=1$, is no larger than $\alpha$. This would meet the reliability constraint (\ref{goal}). In this regard, note that the parameteric model may potentially be more sample-efficient, but it is adversely affected when the unknown feature distribution is not Gaussian.

\section{CRC-Based FNR Risk Control}
Both existing parameteric and non-parameteric methods reviewed in the previous section can only guarantee the FNR requirement (\ref{goal}) under the assumption of an exact estimate of the quantiles. This assumption, in turn, can be verified only in the asymptotic regime of an infinitely large calibration data set, i.e., when $|\mathcal{D}^\text{cal}|\rightarrow \infty$. In this paper, we propose a non-parameteric thresholding method that can provably meet the FNR constraint (\ref{goal}), irrespective of the true distribution $p(\mathbf{Y,z})$ and of the size $|\mathcal{D}^\text{cal}|$ of the calibration data set.

Unlike the existing methods reviewed in the previous section, the proposed method leverages information across all subbands to determine a common threshold $\gamma_1=\gamma_2=\dots=\gamma_M=\gamma$. The main idea is to set the common threshold $\gamma$ to ensure that a suitably corrected version of the empirical FNR, estimated using the calibration set $\mathcal{D}^\text{cal}$, is no larger than the target level $\alpha$. By choosing the correction as dictated by CRC \cite{crc}, we show that condition (\ref{goal}) can be met without making any assumptions on the ground-truth distribution $p(\mathbf{Y,z})$.

To start, for each $i$-th calibration point $(\mathbf{Y}_i^\text{cal},\mathbf{z}_i^\text{cal})$, we evaluate the FNR as a function of a threshold $\gamma$ as follows. Given a threshold $\gamma$, we estimate the occupancy vector as $\hat{\mathbf{z}}_i^\text{cal}=[\hat{z}_{i,1},\dots,\hat{z}_{i,M}]^T$, where
\begin{equation}
    \hat{z}_{i,m}(\gamma)=\mathbbm{1}(f_{i,m}^\text{cal}\geq \gamma),\quad m=1,\dots,M
    \label{dec1}
\end{equation}
is the occupancy estimate for the $m$-th subband.

The empirical FNR for the $i$-th calibration data point is then calculated by following (\ref{fnr}) as
\begin{equation}
    \text{FNR}_i(\gamma)=\frac{\sum_{m=1}^M (1-\hat{z}_{i,m}^\text{cal}(\gamma))\cdot z_{i,m}^\text{cal}}{\sum_{m=1}^M z_{i,m}^\text{cal}}.
    \label{loss}
\end{equation}
By (\ref{dec1}), function $\text{FNR}_i(\gamma)$ is non-decreasing in the threshold $\gamma$. In particular, we have the inequalities $0\leq \text{FNR}_i(\gamma)\leq 1$, with $\text{FNR}_i(0)=0$. In fact, with $\gamma=0$, all subbands are estimated as occupied, and the FNR equals zero. Furthermore, as $\gamma$ increases, $\text{FNR}_i(\gamma)$ tends to 1.

We evaluate the empirical average FNR over the calibration data set $\mathcal{D}^\text{cal}$ as
\begin{equation}
    \overline{\text{FNR}}(\gamma|\mathcal{D^\text{cal}})=\frac{1}{\mathcal{|D^\text{cal}}|} \sum_{i=1}^{|\mathcal{D^\text{cal}}|} \text{FNR}_i(\gamma).
    \label{empfnr}
\end{equation}
The proposed method chooses the common threshold $\gamma$ as
\begin{equation}
    \hat{\gamma}(\mathcal{D}^\text{cal})=\textrm{sup}\ \left\{\gamma:\frac{|\mathcal{D}^\text{cal}|}{|\mathcal{D}^\text{cal}|+1}\overline{\text{FNR}}(\gamma|\mathcal{D}^\text{cal})+\frac{1}{|\mathcal{D}^\text{cal}|+1}\leq\alpha\right\}.
    \label{crc}
\end{equation}

Following CRC, the threshold $\hat{\gamma}(\mathcal{D}^\text{cal})$ is obtained by correcting the empirical FNR in (\ref{empfnr}) adding a term, $1/(|\mathcal{D}^\text{cal}|+1)$, which can be interpreted as the maximum value of the FNR attained on a fictitious additional calibration point. Note that, with the added fictitious data point, the calibration data set efficiently includes $|\mathcal{D}^\text{cal}|+1$ examples. This correction is instrumental in guaranteeing the following property.

\begin{theorem} \label{ther_1}
    Assume that the $|\mathcal{D}^\text{cal}|$ calibration samples in the calibration data set \(\mathcal{D}^\text{cal}\) and the test sample \((\mathbf{Y},\mathbf{z})\) are generated in an i.i.d. manner from distribution \(p(\mathbf{Y,z})\). Then, the inequality  
    \begin{equation}
        \mathbb{E}[{\textup{FNR}}(\hat{\gamma}(\mathcal{D}^\text{cal})|\mathcal{D}^\text{cal})]\leq \alpha
        \label{goal3}
    \end{equation}
    holds for any $\alpha\in [0,1]$, where the expectation is taken of the joint distribution over calibration and test samples.
\end{theorem}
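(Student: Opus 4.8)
The plan is to recognize (\ref{goal3}) as an instance of the conformal risk control guarantee of \cite{crc}, specialized to the FNR loss, and to prove it directly via an exchangeability argument. I would treat the test point $(\mathbf{Y},\mathbf{z})$ as an additional $(n{+}1)$-th sample, writing $n=|\mathcal{D}^\text{cal}|$ and letting $\text{FNR}_{n+1}(\gamma)$ denote its per-point FNR defined exactly as in (\ref{loss}); the quantity to be bounded, $\text{FNR}(\hat{\gamma}(\mathcal{D}^\text{cal})|\mathcal{D}^\text{cal})$, is then $\text{FNR}_{n+1}(\hat{\gamma}(\mathcal{D}^\text{cal}))$. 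The two structural facts I rely on, both noted already in the excerpt, are that each $\text{FNR}_i(\gamma)$ is non-decreasing in $\gamma$ and lies in $[0,1]$. I would additionally record that, since $\text{FNR}_i(\gamma)$ is assembled from the indicators $\mathbbm{1}(f^\text{cal}_{i,m}<\gamma)$, it is \emph{left-continuous} in $\gamma$; this is precisely the property that makes the supremum in (\ref{crc}) attain its defining constraint.

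First I would introduce the oracle threshold computed from all $n{+}1$ points,
\[
\hat{\gamma}^\star=\sup\Big\{\gamma:\tfrac{1}{n+1}\textstyle\sum_{i=1}^{n+1}\text{FNR}_i(\gamma)\le\alpha\Big\},
\]
and compare it with $\hat{\gamma}(\mathcal{D}^\text{cal})$. The key observation is that, because $\text{FNR}_{n+1}(\gamma)\le 1$, every $\gamma$ satisfying the calibration constraint in (\ref{crc}) also satisfies $\tfrac{1}{n+1}\sum_{i=1}^{n+1}\text{FNR}_i(\gamma)\le\alpha$; hence the calibration feasible set is contained in the oracle feasible set, and taking suprema gives $\hat{\gamma}(\mathcal{D}^\text{cal})\le\hat{\gamma}^\star$. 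Monotonicity of the test loss then yields $\text{FNR}_{n+1}(\hat{\gamma}(\mathcal{D}^\text{cal}))\le\text{FNR}_{n+1}(\hat{\gamma}^\star)$, so it suffices to bound the expectation of the right-hand side.

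The crux is the exchangeability step. Since $\hat{\gamma}^\star$ depends only on the unordered collection of loss functions $\{\text{FNR}_i(\cdot)\}_{i=1}^{n+1}$, it is invariant to permutations of the $n{+}1$ samples; as these samples are i.i.d. and hence exchangeable, the random variable $\text{FNR}_j(\hat{\gamma}^\star)$ has the same distribution for every index $j$. Averaging over $j$ gives
\[
\mathbb{E}\big[\text{FNR}_{n+1}(\hat{\gamma}^\star)\big]=\mathbb{E}\Big[\tfrac{1}{n+1}\textstyle\sum_{i=1}^{n+1}\text{FNR}_i(\hat{\gamma}^\star)\Big].
\]
Finally, left-continuity and monotonicity of the averaged risk guarantee that the supremum defining $\hat{\gamma}^\star$ is feasible, i.e. $\tfrac{1}{n+1}\sum_{i=1}^{n+1}\text{FNR}_i(\hat{\gamma}^\star)\le\alpha$, so the right-hand side above is at most $\alpha$. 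Chaining the three inequalities establishes (\ref{goal3}).

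I expect the main obstacle to be the exchangeability step and the measure-theoretic care it demands. One must verify that $\hat{\gamma}^\star$ is genuinely permutation-invariant, so that exchanging the test point with any calibration index leaves it unchanged, and that the defining supremum is actually attained and feasible. The latter is not automatic for a general non-decreasing function: it hinges on the left-continuity coming from the ``$\ge$'' convention in (\ref{dec1}), and on the feasible set being non-empty with its supremal value not overshooting $\alpha$. Non-emptiness together with the $+1/(n+1)$ correction are what make this work for every finite $n$. Degenerate cases, such as $\alpha=1$ or a calibration point with an all-zero occupancy vector (where $\text{FNR}_i$ is formally undefined in (\ref{loss})), would be handled by a separate convention but do not affect the argument.
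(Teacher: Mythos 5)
Your proposal is correct and is, in substance, exactly the argument the paper relies on: the paper's proof is a one-line citation to Theorem 1 of the CRC reference, and your derivation faithfully reconstructs that proof (oracle threshold on $n+1$ points, feasibility via the $1/(n+1)$ correction and the bound $\mathrm{FNR}_{n+1}\leq 1$, monotonicity, exchangeability, and left-continuity replacing the usual right-continuity because the threshold here is a supremum over a non-decreasing loss). No gaps; the degenerate cases you flag (all-zero occupancy, $\alpha=1$) are likewise left implicit in the paper.
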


\begin{proof}
    The proof follows from \cite[Theorem 1]{crc}.
\end{proof}

Thus, the proposed CRC-based thresholding for sub-Nyquist spectrum sensing can provably guarantee the FNR constraint (\ref{goal}) with any desired level $\alpha$. The guarantee in Theorem~\ref{ther_1} holds irrespective of the sampling rate, the feature extraction strategy, the number of calibration samples, and the characteristics of the observation noise.

\section{Numerical Results}
In this section, the proposed CRC-based thresholding method is compared with the existing thresholding methods reviewed in Sec. II for both PSD and LV features.

\subsection{Simulation Settings}
Throughout this section, we consider a spectral band composed of $M=40$ subbands, with each subband having a bandwidth of $B=25$ MHz. Each subband can be occupied by a single user transmitting binary phase shift keying (BPSK) signal, so that the $m$-th signal $s_m(t)$ in (\ref{sig}) is given by
\begin{equation} \label{eq:BPSK}
    s_m(t)=\sqrt{\frac{2E_s}{T_s}}\sum_{n\in\mathbb{Z}} D[n]s\left(t-\frac{n}{B}\right)\cos(2\pi f_mt),
\end{equation}
where $E_s$ represents the symbol energy, chosen uniformly in the range $[1,4]$; $D[n]\in\{-1,1\}$ is BPSK symbol; $s(t)$ is the sinc pulse; $T_s$ denotes symbol duration, $T_s=1/B$; and $f_m$ is the downconverted central frequency of the $m$-th subband. For MCS sampling, $P=20$ cosets are employed, and the sampling frequency is set to $1/LT=25$ MHz so that $L=80>P$. The target FNR level is $\alpha=0.1$. The signal-to-noise ratio (SNR) is defined as the ratio of the signal power to the noise power,  i.e., $\text{SNR}=E_s/N_0$.

The LV extractor is implemented via a convolutional neural network (CNN). To this end, the discrete Fourier transform of the $20\times N_s$ sub-Nyquist sample matrix $\mathbf{Y}$ is separated into its real and imaginary components, obtaining a $40 \times N_s$ real-valued matrix. This matrix is then passed through two one-dimensional (1D) convolutional layers equipped with $1\times 3$ filters, followed by a batch normalization layer, and by a fully-connected layer with sigmoid activation functions. The binary cross-entropy (BCE) is adopted to train the LV extractor in a manner similar to \cite{gao}, \cite{deepsense}. The performance is evaluated via the expected FNR and the TNR, averaged over 50 trials. Error bars illustrate ranges of values covering 95\% of the realized values.

\subsection{Impact of the SNR}

\begin{figure}[h]
    \centerline{\includegraphics[width=11cm]{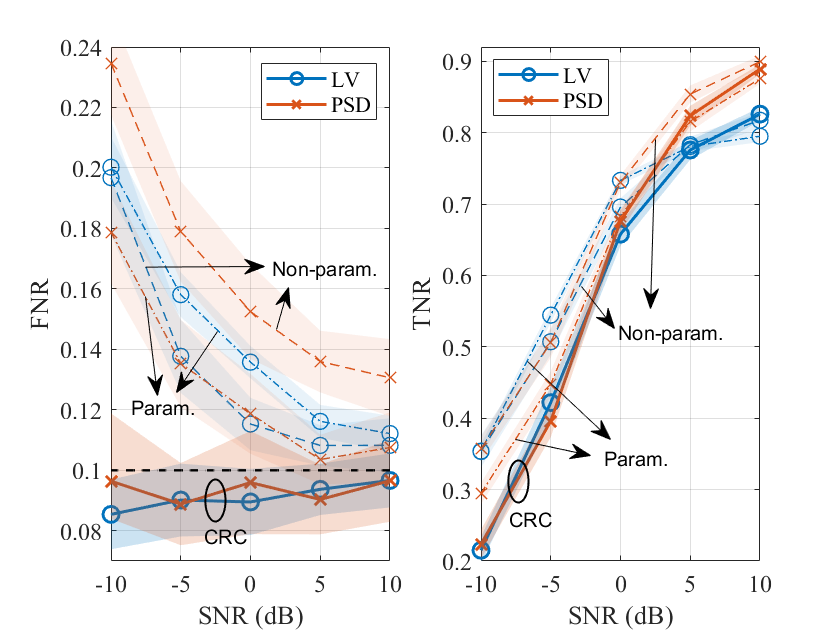}}
    \caption{FNR and TNR as a function of the SNR when the number of calibration data points is $|\mathcal{D}^\text{cal}|=150$, and the number of received samples is \emph{$N= 960$}.}
    \label{calideal}
    
\end{figure}

To start, we investigate the impact of the SNR on FNR and TNR by assuming that the number of occupied subbands is uniformly distributed in the set $\{1,2,\dots,10\}$. The calibration data set consists of $|\mathcal{D}^\text{cal}|=150$ samples, and the total number of received samples is set to $N=960$. Fig. \ref{calideal} plots FNR and TNR as a function of the SNR. The proposed CRC-based scheme ensures for both types of features that the FNR remains below the desired level $\alpha=0.1$ as per Theorem 1. For the existing schemes, the FNR tends to approach the target level as SNR increases, while not strictly meeting the FNR requirement (\ref{goal}) even for SNR levels as large as 10 dB. In this regard, we note that PSD-based feature extraction works better with parameteric thresholding; while LV prefers non-parameteric thresholding.

\subsection{Impact of the Size of the Calibration Set and of the Number of Samples}
\begin{figure}[h]
    \centerline{\includegraphics[width=11cm]{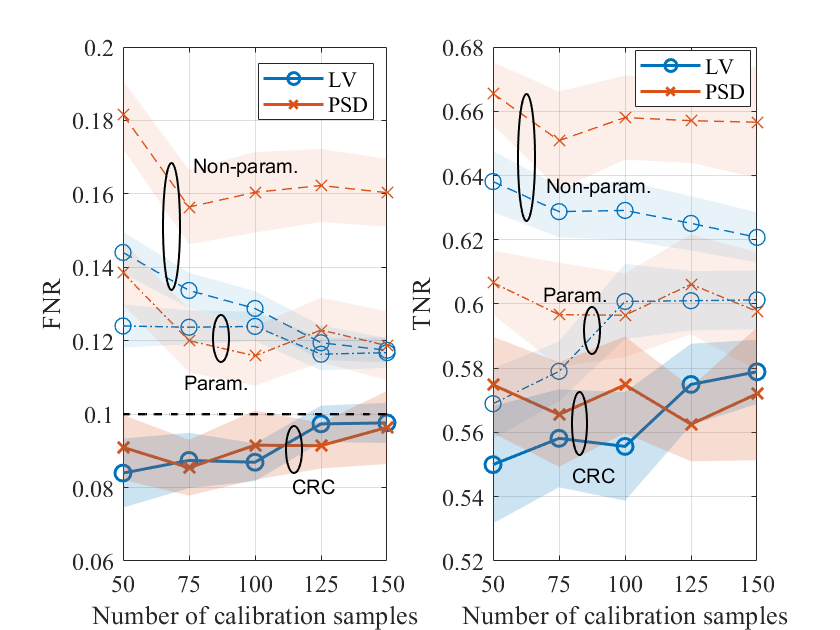}}
    \caption{FNR and TNR as a function of the number of received calibration data point set when the number of received samples is $N=960$, and we set $\text{SNR}=5$ dB.}
    \label{calcal}
\end{figure}

In Fig. \ref{calcal}, the FNR and TNR are depicted as a function of the number of calibration data points under the same condition as Fig. \ref{calideal} except that we set $\text{SNR}=5$ dB, and we consider the more challenging case in which the number of occupied subbands is uniformly selected in the set $\{1,2,\dots,20\}$. The total number of received samples is set to $N=960$. For the existing schemes, even increasing the size of the calibration data set up to 150 does not ensure the FNR requirement (\ref{goal}).

\begin{figure}[h]
    \centerline{\includegraphics[width=11cm]{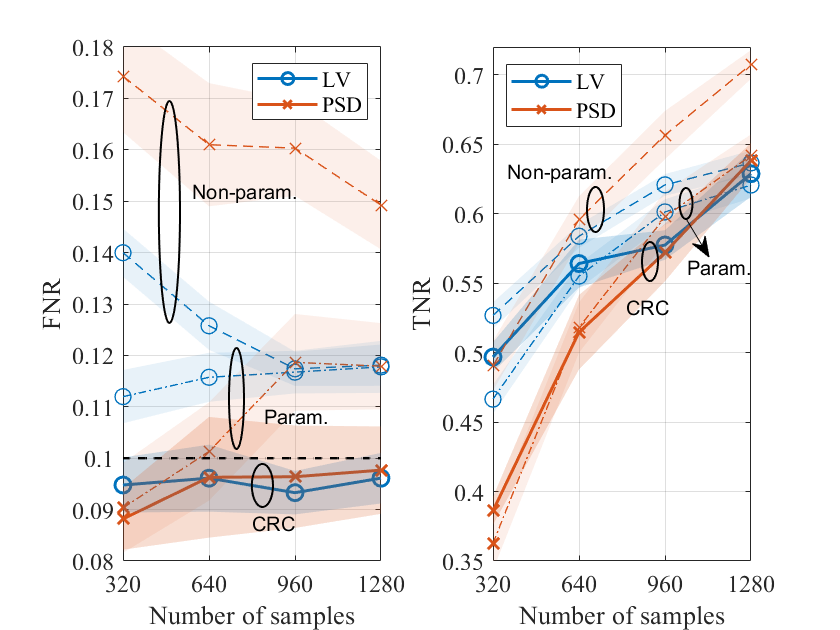}}
    \caption{FNR and TNR as a function of the number of received samples, $N$, when the number of calibration data point set is $|\mathcal{D}^\text{cal}|=150$, and we set $\text{SNR}=5$ dB.}
    \label{calsam}
\end{figure}

In Fig. \ref{calsam}, we vary the number of received samples $N$ for a fixed number $|\mathcal{D}^\text{cal}|=150$ of calibration samples, and we set $\text{SNR}=5$ dB.  As the total number $N$ of received samples increases, the FNR of the non-parameteric schemes becomes closer to the target level with increased TNR. In contrast, the parameteric method suffers from an increased mismatch between the assumed Gaussian distribution and the true distribution of the PSD features. As $N$ grows, given the choice of uniformly distributed transmitted powers, the distribution of each PSD feature tends to be a mixture of Gaussian distributions with decreasing variance as $N$ grows large.

\subsection{Impact of the Sampling Rate}
In Fig. \ref{calcoset}, we investigate the impact of the sampling rate by changing the number of cosets $P$. By increasing $P$, sub-Nyquist sampling becomes increasingly closer to Nyquist sampling, which is recovered for $P=80$. In this experiment, the symbol energy is set to a deterministic value of $E_s=1$ to avoid undesired deficits in parameterized thresholding \cite{vincent}, thus allowing us to focus solely on the impact of the sampling rate. We set $\text{SNR}=0$ dB, and the number of occupied subbands is uniformly distributed in the set $\{6,7,\dots,20\}$.

It is observed that increasing the sampling rate improves spectrum sensing, i.e., lower FNR with higher TNR, for all the schemes, while only the CRC-based thresholding achieves the target FNR irrespective of the sampling rate. The efficacy of the CRC-based thresholding is especially highlighted in the low sampling rate regime ($P < 5$), by noting that all the other schemes fail to achieve the target FNR.

\begin{figure}[h]
    \centerline{\includegraphics[width=11cm]{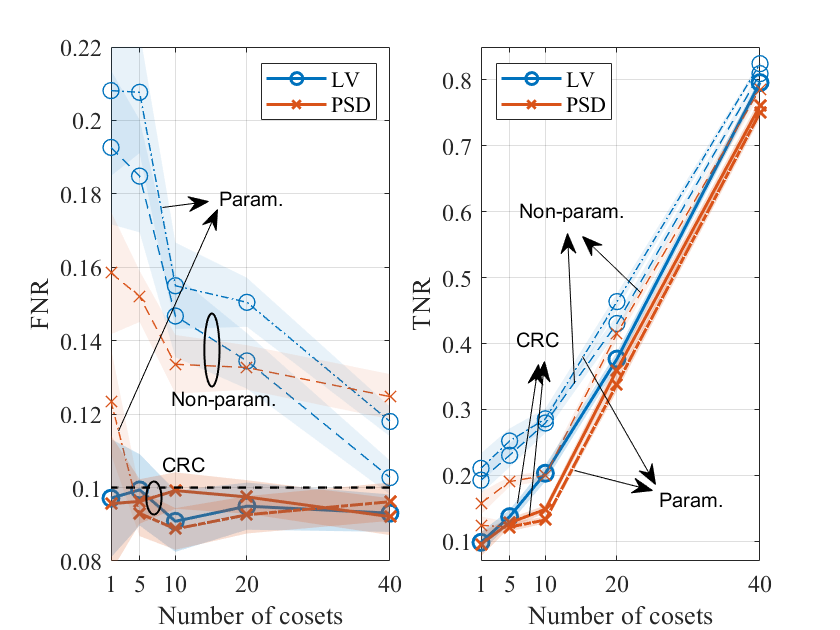}}
    \caption{FNR and TNR as a function of the number of cosets, $P$, when the number of calibration data point set is $|\mathcal{D}^\text{cal}|=150$, and we set $\text{SNR}=0$ dB. The number of samples for each coset is set to $N_s=32$, and the symbol energies are set to a fixed value, $E_s=1$.}
    \label{calcoset}
\end{figure}

\subsection{Impact of Spectral Separation}
Finally, we investigate the impact of the amount of spectral overlap between adjacent subbands replacing the sinc pulse in (\ref{eq:BPSK}) with a raised cosine waveform with the roll-off factor $\beta \in [0,1]$. As $\beta$ increases, the amount of overlap grows larger. The SNR is fixed at 5 dB, and we assume $|\mathcal{D}^\text{cal}|=150$ calibration samples, and $N=960$ sub-Nyquist samples. It is observed in Fig. \ref{calrf} that LV features are more robust to an increased overlap. In particular, using LV features with the proposed CRC-based thresholding scheme not only always guarantees the target FNR level but also maintains a large TNR for all values of the roll-off factor $\beta \in [0,1]$.

\begin{figure}[h]
    \centerline{\includegraphics[width=11cm]{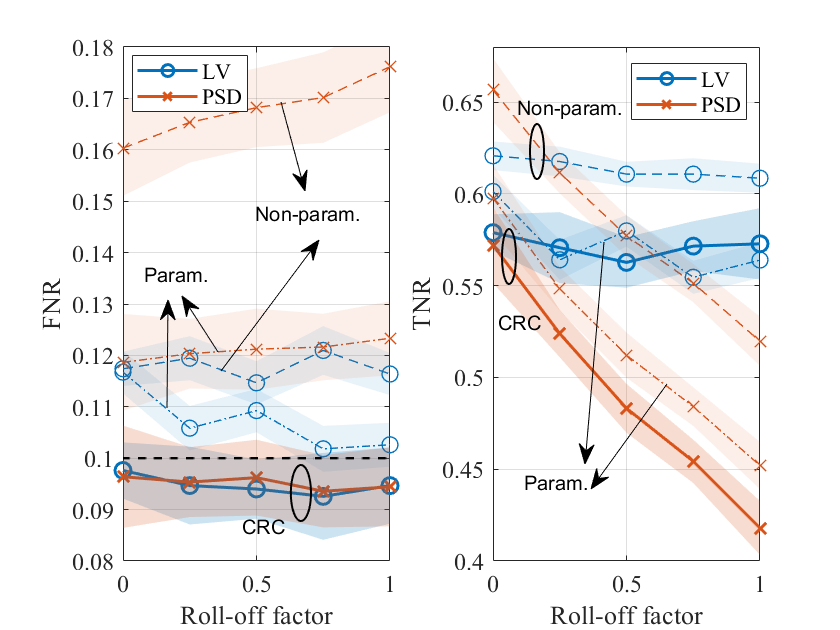}}
    \caption{FNR and TNR as a function of the roll-off factor $\beta$ in the modulating waveform when the number of calibration data point set is $|\mathcal{D}^\text{cal}|=150$, and we set $\text{SNR}=5$ dB.}
    \label{calrf}
\end{figure}

\section{Conclusion}
In this letter, we proposed a thresholding method based on CRC for reliable sub-Nyquist spectrum sensing. The proposed framework can be applied to any sub-Nyquist spectrum sensing technique that extracts per-subband features to provide assumption-free guarantees on the FNR level. Performance benefits are even seen to be particularly pronounced in settings with limited sampling rates. Future work may consider the design of \emph{CRC-aware} sub-Nyquist feature extraction schemes with the aim of further improving the TNR.

\bibliographystyle{ieeetr}
\bibliography{ref}

\end{document}